\newcommand{\PAPER}[1]{}
\newcommand{\LNCS}[1]{#1}
\theoremstyle{plain}
\newtheorem{theorem}{Theorem}[section]
\newtheorem{lemma}[theorem]{Lemma}
\newtheorem{claim}{Claim}
\newenvironment{remark}{\paragraph{Remark.}}{}
\newenvironment{remarks}{\paragraph{Remarks.}}{}
\renewenvironment{remark}{\paragraph{Remark.}}{}
\newenvironment{remarks}{\paragraph{Remarks.}}{}
\def\Seq#1{\langle #1 \rangle}
\def\Set#1{\left\{ #1 \right\}}
\def\ceil#1{\lceil #1 \rceil}
\def\Floor#1{\left\lfloor #1 \right\rfloor}
\def\Ceil#1{\left\lceil #1 \right\rceil}
\title{Improved Upper and Lower Bounds for LR Drawings of\\ Binary Trees}
\author{Timothy M. Chan$^*$
\and
Zhengcheng Huang\thanks{Department of Computer Science, University of Illinois at Urbana-Champaign, \{tmc,zh3\}@illinois.edu}}
\title{Improved Upper and Lower Bounds for\\ LR Drawings of Binary Trees}
\author{Timothy M. Chan\inst{1}
\and
Zhengcheng Huang\inst{1}
}
\institute{Department of Computer Science, University of Illinois at Urbana-Champaign
\email{\{tmc,zh3\}@illinois.edu}}
\begin{document}

\maketitle

\begin{abstract}
  In SODA'99, Chan introduced a simple type of planar straight-line upward order-preserving drawings of binary trees, known as \emph{LR drawings}: such a drawing is obtained by picking a root-to-leaf path, drawing the path as a straight line, and recursively drawing the subtrees along the paths.  Chan proved that any binary tree with $n$ nodes admits an LR drawing with $O(n^{0.48})$ width.  In SODA'17, Frati, Patrignani, and Roselli proved that there exist families of $n$-node binary trees for which any LR drawing has $\Omega(n^{0.418})$ width.  In this paper, we improve Chan's upper bound to $O(n^{0.437})$ and Frati {\it et al.}'s lower bound to $\Omega(n^{0.429})$. 
    
\end{abstract}

\section{Introduction}


Drawings of trees on a grid with small area have been extensively studied in the graph drawing literature
\cite{Bac,BieJGAA17,BieCCCG17,SODA99,GD96,Cre,CrescenziP97,CrePen,Fra07,%
FratiPR20,GarSoCG93,GarRus03,GarRusICCSA,%
GarRus04,LeeTHESIS,Lei,ReiTil,ShKiCh,ShiIPL,ShiTHESIS,Tre,Val} (see also the book~\cite{DiBOOK} and a recent survey~\cite{DiFraSURV}). 

In this paper, we focus on one simple type of drawings of binary trees called \emph{LR drawings}, which was introduced by Chan in SODA'99~\cite{SODA99} (and named in a later paper by Frati, Patrignani, and Roselli~\cite{FratiPR20}):  For a given binary tree $T$, we place the root somewhere on the top side of the bounding box, recursively draw its left subtree $L$ and its right subtree $R$, and combine the two drawings by applying one of two rules.  In the \emph{left rule}, we connect the root of $T$ to the root of $R$ by a vertical line segment, 
place the bounding box of $L$'s drawing one unit to the left of the vertical line segment, and place the bounding box of $R$'s drawing underneath.  In the \emph{right rule}, we connect the root of $T$ to the root of $L$ by a vertical line segment, place the bounding box of $R$'s drawing one unit to the right of the vertical line segment, and place the bounding box of $L$'s drawing underneath.  See Figure~\ref{fig:lr}(a).  LR drawings are precisely those that can be obtained by recursive applications of these two rules.

\PAPER{
\begin{figure}
  \begin{subfigure}{0.55\textwidth}
  \includegraphics[scale=0.7]{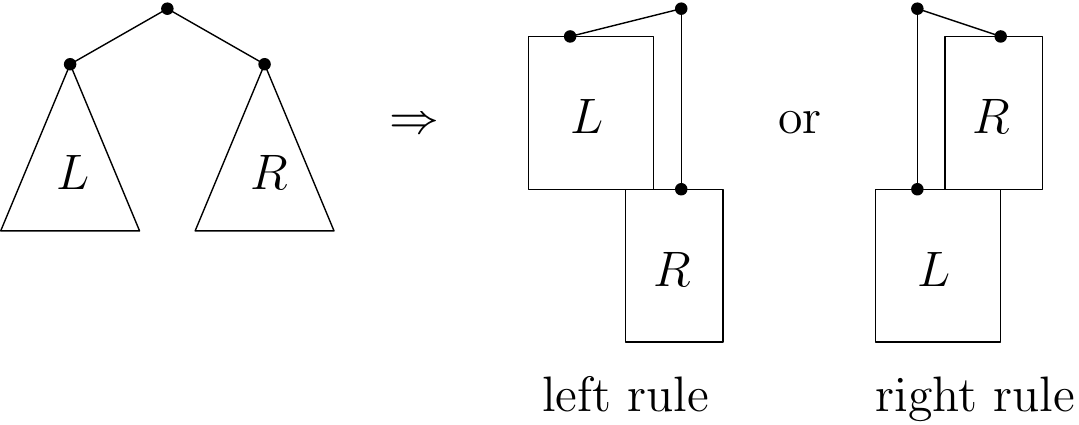}
 \end{subfigure}\hfill
 \begin{subfigure}{0.4\textwidth}
   \includegraphics[scale=0.8]{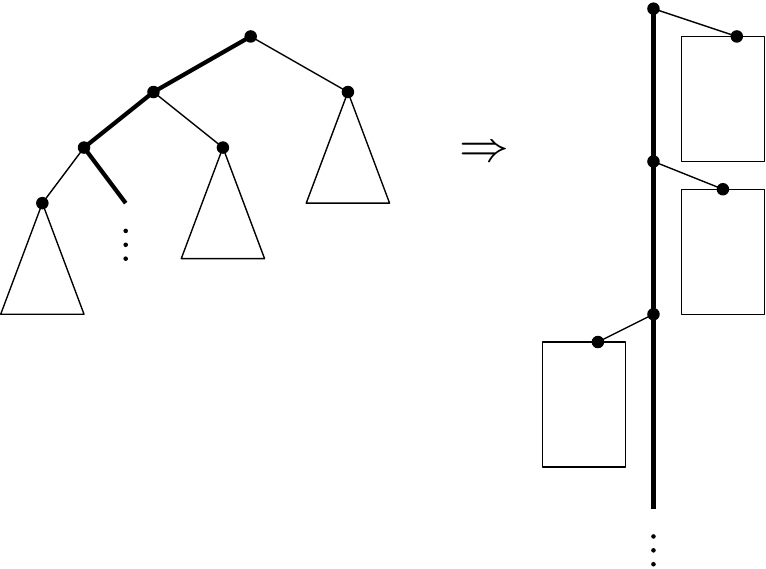}
   \end{subfigure}
  \caption{LR drawing}
  \label{fig:lr}
\end{figure}
}
\LNCS{
\begin{figure}
  \begin{subfigure}[b]{0.55\textwidth}
  \includegraphics[scale=0.52]{figures-ipe/LR-rules.pdf}
  \vspace{5ex}
  \caption{}
 \end{subfigure}\hfill
 \begin{subfigure}[b]{0.42\textwidth}
   \includegraphics[scale=0.65]{figures-ipe/LR-new.pdf}
     \caption{}
   \end{subfigure}
  \caption{LR drawing}
  \label{fig:lr}
\end{figure}

}


(For historical context, we should mention that a similar notion of \emph{hv drawings} were proposed before in some of the early papers on tree drawings~\cite{Cre,CrescenziP97,CrePen}, and were also defined recursively using two rules; the key differences are that in hv drawings, the root is always placed at the upper left corner, and the order of the left and right subtrees may not be preserved.)

Alternatively, LR drawings have the following equivalent definition: for a given binary tree $T$, we pick a root-to-leaf path $\pi$, draw $\pi$ on a vertical line, and recursively draw all subtrees of $\pi$ (i.e., subtrees rooted at siblings of the nodes along $\pi$), placing the bounding boxes of left subtrees of $\pi$ one unit to the left of the vertical line, and the bounding boxes of the right subtrees of $\pi$ one unit to the right of the vertical line.  See Figure~\ref{fig:lr}(b).  


It is easy to see that LR drawings satisfy the following desirable properties:
\begin{enumerate}
  \item {\it Planar}: edges do not cross in the drawing.
  \item {\it Straight-line}: edges are drawn as straight line segments.
  \item {\it Strictly upward}: a parent has a strictly larger $y$-coordinate than each child.
  \item {\it Order-preserving}: the edge from a parent to its left child is to the left of the edge from the parent to its right child in the drawing.
 \end{enumerate}
Indeed, the original motivation for LR drawings is in finding ``good'' planar, straight-line, strictly upward, order-preserving drawings of binary trees~\cite{SODA99}.  Goodness here is measured in terms of the \emph{area} of a drawing, defined as the width (the number of grid columns) times the height (the number of grid rows),
assuming that nodes are placed on an integer grid.  The goal is to prove worst-case bounds on the minimum area needed for such drawings as a function of the number of nodes $n$.
As $\Omega(n)$ height is clearly necessary in the worst case for strictly upward drawings (and LR drawings have $O(n)$ height), the goal becomes bounding the width. Chan's original paper gave several methods to produce LR drawings of arbitrary binary trees, the first method guaranteeing $O(n^{0.695})$ width, a second method with $O(\sqrt{n})$ width, and a final method (described in the appendix of his paper) with $O(n^{0.48})$ width.

More recently, in   SODA'17, Frati, Patrignani, and Roselli~\cite{FratiPR20} proved the first nontrivial lower bound, showing that there exist binary trees for which any LR drawing requires
$\Omega(n^{0.418})$ width.  This raises an intriguing question: can the gap between upper and lower bounds be closed, and the precise value of the exponent be determined?  

It should be mentioned that other methods were subsequently found for planar, straight-line, strictly upward, order-preserving drawings of binary trees with smaller width ($2^{O(\sqrt{\log n})}$ in Chan's original paper, and 
eventually, $O(\log n)$  in a paper by Garg and Rusu~\cite{GarRus03}).  Nevertheless, the question on LR drawings is still interesting and natural, as it is fundamentally about combinatorics of trees, or more specifically, decompositions of trees via path separators (instead of the more usual vertex or edge separators).  Indeed, by the alternative definition,
the minimum LR-drawing width $W^*(T)$ of a binary tree~$T$ can be described by the following  self-contained formula, without reference to geometry:

\[ W^*(T) \:=\: \min_\pi \max_{\alpha,\beta}\: (W^*(\alpha)+W^*(\beta)+1),
\]
where the minimum is over all root-to-leaf paths $\pi$ in $T$, and the maximum is over all left subtrees $\alpha$ of $\pi$ and all right subtrees $\beta$ of $\pi$. 

The LR drawing problem was also mentioned 
in Di Battista and Frati's recent survey~\cite{DiFraSURV} (as ``Open problem 10'').\footnote{Technically, that survey asks about a different but related function: $W^{**}(n) = \max_T \min_\pi\max_{\alpha,\beta} (W^{**}(|\alpha|) + W^{**}(|\beta|) + 1)$, where the outer maximum is over all $n$-node binary trees~$T$.  This function may be larger than $\max_{T: |T|=n} W^*(T)$.}
LR drawing techniques have been applied to solve other problems, for example, on \emph{octagonal},\footnote{All edges have slope from $\{0,\pm 1,\pm\infty\}$.} planar, straight-line, strictly upward, order-preserving drawings of binary trees~\cite{SODA99},
\emph{orthogonal},\footnote{All edges are horizontal or vertical.} planar, straight-line, \emph{non-upward}, order-preserving drawings of binary trees~\cite{Fra07},
and planar straight-line drawings of \emph{outerplanar graphs}~\cite{GargR07,BattistaF09}, although in each of these applications, better methods not relying on LR drawings were eventually found \cite{BieCCCG17,Chan18,FratiPR20}.

In this paper, we make progress in narrowing the gap on the width bounds for LR drawings of binary trees:
we improve Chan's upper bound from $O(n^{0.48})$ to $O(n^{0.437})$, and improve Frati et al.'s lower bound from $\Omega(n^{0.418})$ to $\Omega(n^{0.429})$.

\section{Upper Bound}

In this section, we present an algorithm for LR drawings that achieves width $O(n^{0.438})$. A small improvement to $O(n^{0.437})$ will be given in the next section.
Our algorithm builds upon Chan's approach \cite[Appendix~A]{SODA99} but uses new ideas to substantially improve his $O(n^{0.48})$ upper bound. 
Throughout the paper,
let $|T|$ denote the size (i.e., the number of nodes) in a tree~$T$.

\subsection{The Algorithm}

Given a binary tree $T$ with $n$ nodes, we describe a recursive algorithm to produce an LR drawing of $T$ and show by induction that its width is at most $cn^p$, for some constants $p$ and $c$ to be set later.

For $n$ smaller than a sufficiently large constant, we can draw $T$ arbitrarily.  Otherwise,
we maintain a path $\pi=\Seq{v_0,\ldots,v_t}$. 
A {\em subtree of $\pi$} refers to a subtree rooted at a sibling of a node in $\pi$  (it does not include the two subtrees at $v_t$).
Let $\alpha$ and $\beta$ denote the largest left subtree and right subtree of~$\pi$, respectively.
We maintain the invariant that 

$$|\alpha|^p+|\beta|^p\le (1-\delta)n^p$$
for some sufficiently small constant $\delta>0$.  Initially, $t=0$ and $v_0$ is the root of $T$. If $v_t$ is a leaf, then we draw the subtrees of $\pi$ recursively and combine them by aligning $\pi$ vertically; the width is bounded by $c|\alpha|^p+c|\beta|^p+1$, which by the invariant (and the induction hypothesis) is at most $c(1-\delta)n^p+1 < cn^p$ for a sufficiently large $c$ (depending on $p$ and $\delta$). 
From now on, assume that $v_t$ is not a leaf.  Let $L$ and $R$ be the left and right subtree of the current node $v_t$, respectively.  For some choice of constants $\delta>0$ and $h$, we consider four cases (which cover all possibilities, as we will show in the next subsection). 

\medskip
\noindent{\sc Case 1:}
$|\alpha|^p+|R|^p\leq(1-\delta)n^p$.
  \label{cas:1}
  Set $v_{t+1}$ to be the left child of $v_t$. Increment $t$ and repeat.

\medskip
\noindent{\sc Case 2:}
$| \beta|^p+|L|^p\leq(1-\delta)n^p$.
  \label{cas:2}
  Set $v_{t+1}$ to be the right child of $v_t$. Increment $t$ and repeat.

\medskip
In either of the above two cases, the invariant is clearly preserved.

We may now assume that
$|\alpha|^p+|R|^p>(1-\delta)n^p$ and
$|\beta|^p+|L|^p>(1-\delta)n^p$.
In conjunction with the invariant, we know that
$|\beta|<|R|$ and $|\alpha|<|L|$.

For the next two cases, 
we introduce notation for the left and right subtrees of $\pi$ (see Figure~\ref{fig:left-sub-notation}).
Let $\alpha^{(0)}_1=\alpha$ (the largest left subtree of $\pi$). 
The parent of $\alpha^{(0)}_1$ divides $\pi$ into two segments.
Let $\alpha^{(1)}_1$ and $\alpha^{(1)}_2$ denote the largest left subtree of the top and bottom segment, respectively.
Extend the definition analogously:
For each $i$, the parents of the $2^i-1$ subtrees in $\{ \alpha^{(\ell)}_j \mid 0\leq\ell<i,\ 1\leq j\leq 2^\ell\}$ divide $\pi$ into $2^i$ segments. In the downward order, let $\alpha^{(i)}_1,\ldots,\alpha^{(i)}_{2^i}$ denote the largest left subtrees of these segments.  The above labeling of subtrees resembles a ``ruler pattern'' (like in~\cite{FratiPR20}).
We define the right subtrees $\beta^{(i)}_1,\ldots,\beta^{(i)}_{2^i}$  similarly (we do not care how the left subtrees and the right subtrees of $\pi$ interleave).

\begin{figure}[H]
  \centering
  \includegraphics[width=.17\linewidth]{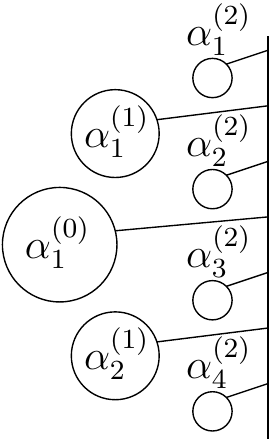}
  \caption{Notation for left subtrees}
  \label{fig:left-sub-notation}
\end{figure}

\medskip
\noindent{\sc Case 3:}
There exists $i\leq h$ such that $\sum_{j=1}^{2^i} |\alpha^{(i)}_j|^p+\max\{|L|,|R|\}^p\leq(1-\delta)n^p$.
  \label{cas:3}
  We generate an LR drawing of $T$ using a procedure called the $i${\it-right-twist}:  We bend $\pi$ at the parents of all subtrees in $\{ \alpha_j^{(\ell)}\mid 0\le \ell< i,\ 1\le j\le 2^\ell\}$
  (all these subtrees are thus pulled downward in the drawing), as illustrated in Figure \ref{fig:right-twist}.
  We recursively draw $R$.
  We draw most of the subtrees of $\pi$ recursively as well, but with the following exceptions: for the subtrees in $\{ \alpha_j^{(\ell)}\mid 0\le \ell< i,\ 1\le j\le 2^\ell\}$, we make their leftmost paths vertically aligned and recursively draw the subtrees of these paths.  
  Similarly, for $L$, we make its leftmost path vertically aligned and recursively draw the subtrees of the path.
  Since every subtree of $\pi$ has size at most $\max\{|\alpha|,|\beta|\}<\max\{|L|,|R|\}$, it is easy to check (using the induction hypothesis) that the
  resulting LR drawing has width at most
  $\sum_{j=1}^{2^i}c|\alpha^{(i)}_j|^p+c\max\{|L|,|R|\}^p + 2^{h}$; this is at most $(1-\delta)cn^p + 2^{h} < cn^p$, for a sufficiently large~$c$ (depending on $p$, $\delta$, and $h$).

\medskip
\noindent{\sc Case 4:}
There exists $i\leq h$ such that $\sum_{j=1}^{2^i} |\beta^{(i)}_j|^p+\max\{|L|,|R|\}^p\leq(1-\delta)n^p$.
  \label{cas:4}
  This is similar to Case 3, by using the
  $i${\it-left-twist}.

\begin{figure}
  \centering
  \begin{subfigure}{.26\textwidth}
    \centering
    \includegraphics[width=\textwidth]{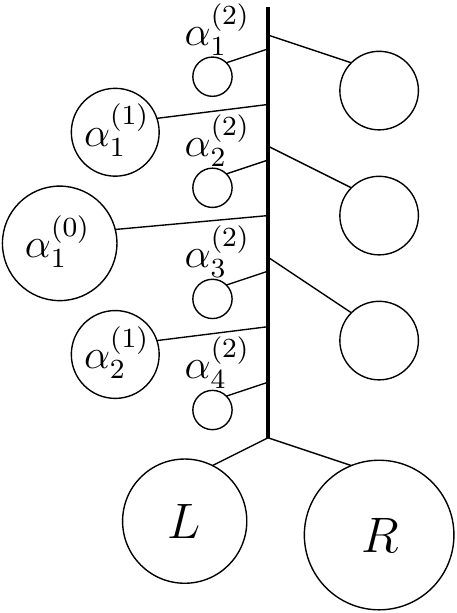}
    \caption{Before right-twist}
  \end{subfigure}
  \hspace{.1\textwidth}
  \begin{subfigure}{.22\textwidth}
    \centering
    \includegraphics[width=\textwidth]{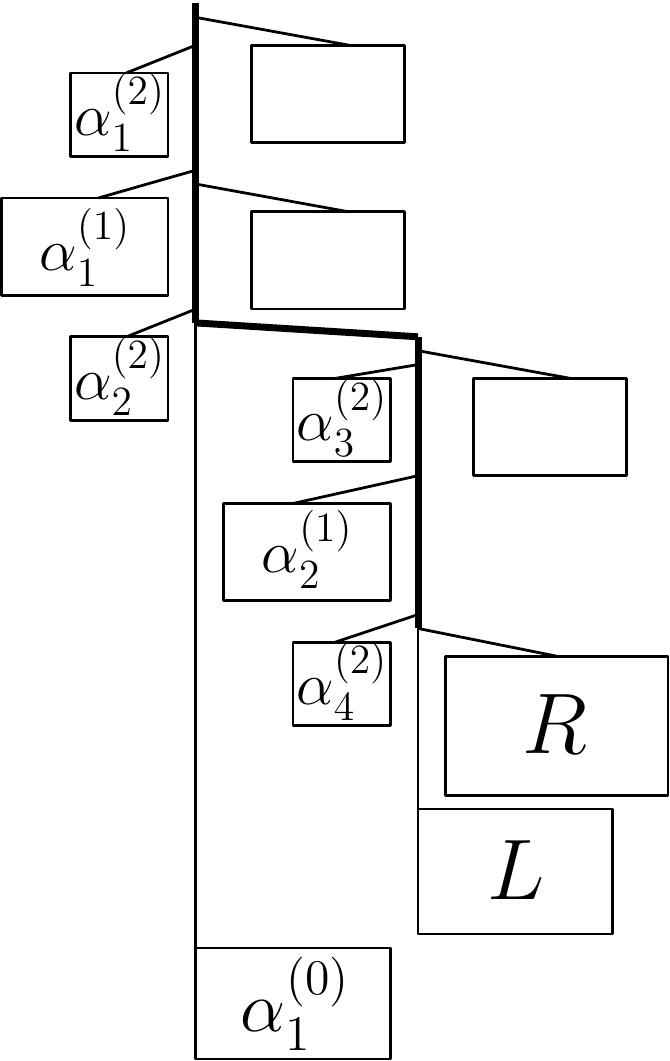}
    \caption{After $1$-right-twist}
  \end{subfigure}
  \hspace{.1\textwidth}
  \begin{subfigure}{.26\textwidth}
    \centering
    \includegraphics[width=\textwidth]{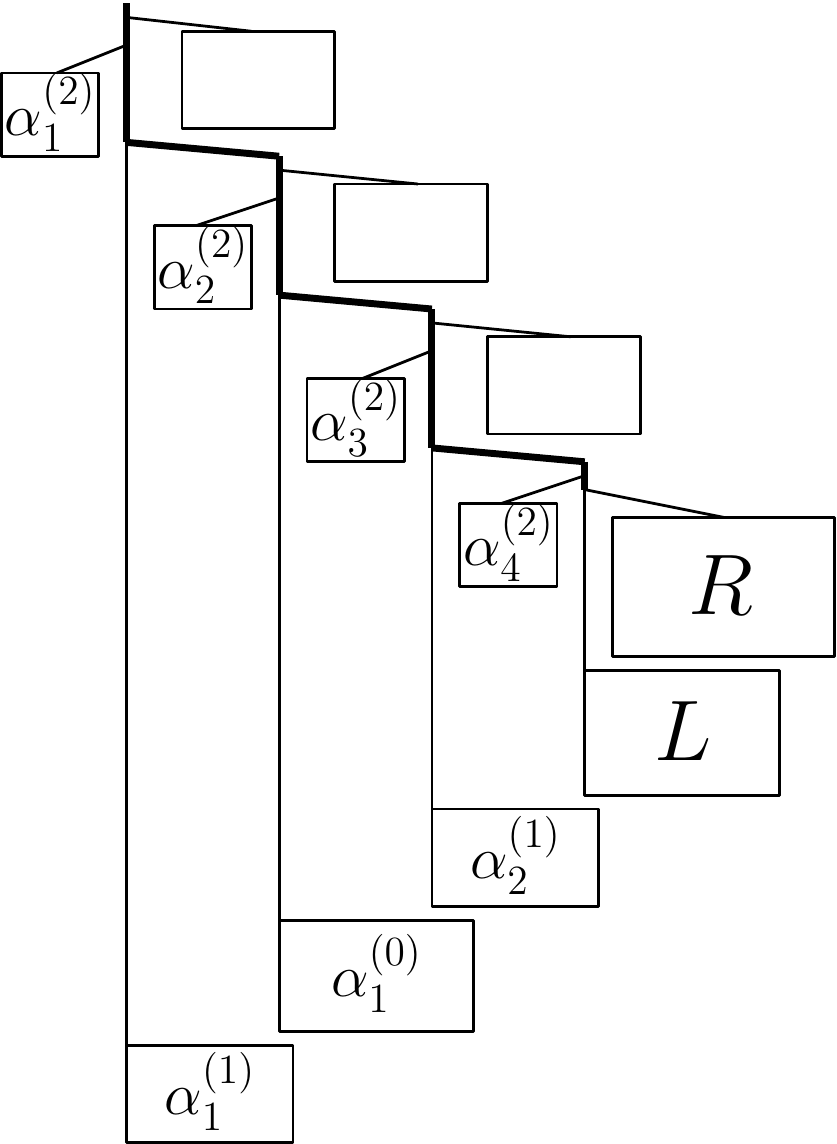}
    \caption{After $2$-right-twist}
  \end{subfigure}
  \caption{Right twist}
  \label{fig:right-twist}
\end{figure}

\begin{remark}
The twisting procedures in Cases 3 and~4, and the introduction of the ``ruler pattern'', are the main new ideas, compared to Chan's previous algorithm~\cite{SODA99}.
\end{remark}

\subsection{Analysis}

To complete the induction proof, it suffices to show that these four cases cover all possibilities.

\begin{lemma}
  \label{lem:complete}
  For $p=0.438$ and a sufficiently small constant $\delta>0$ and a sufficiently large constant~$h$,
  
  \begin{equation*}
    \label{eqn:complete}
    \min\Set{\begin{array}{l}
      |\alpha|^p+|R|^p, \\
      | \beta|^p+|L|^p, \\
      \displaystyle \min_{i=1}^h \left(\sum_{j=1}^{2^i} |\alpha^{(i)}_j|^p + \max\{|L|,|R|\}^p\right), \\
      \displaystyle \min_{i=1}^h \left( \sum_{j=1}^{2^i} | \beta^{(i)}_j|^p + \max\{|L|,|R|\}^p\right)
    \end{array}} \:\leq\: (1-\delta)n^p.
  \end{equation*}
\end{lemma}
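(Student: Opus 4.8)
The plan is to argue by contradiction: assume all four quantities exceed $(1-\delta)n^p$, and derive a numerical impossibility for $p=0.438$. From the first two assumptions, $|\alpha|^p+|R|^p>(1-\delta)n^p$ and $|\beta|^p+|L|^p>(1-\delta)n^p$, and combined with the invariant $|\alpha|^p+|\beta|^p\le(1-\delta)n^p$ we get $|\beta|<|R|$ and $|\alpha|<|L|$, so $\max\{|L|,|R|\}$ is strictly larger than every subtree of $\pi$. Write $m=\max\{|L|,|R|\}$ and $M=(1-\delta)n^p$. The failure of Cases 3 and 4 says that for every $i\le h$, $\sum_{j=1}^{2^i}|\alpha_j^{(i)}|^p > M - m^p$ and likewise for the $\beta$'s. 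The key structural fact to extract is a lower bound on how much ``mass'' (in the $p$-th-power metric) must be packed along $\pi$.

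First I would establish the disjointness/size accounting: the subtrees $\{\alpha_j^{(\ell)}\}$ for a fixed recursion are node-disjoint and are attached at distinct nodes of $\pi$, and similarly there is a size bound of the form $|L|+|R|+(\text{total size of all subtrees of }\pi) \le n$, i.e. $m \le n - (\text{mass of }\pi\text{'s subtrees})$ roughly. Next, the heart of the argument: I want to show that the ``ruler pattern'' forces the left subtrees of $\pi$ to have total ordinary size at least some definite fraction of $n$, because at each level $i$ the $2^i$ largest subtrees of the $2^i$ segments already contribute $\sum_j|\alpha_j^{(i)}|^p > M-m^p$; summing a geometric-like series over $i=1,\dots,h$ (being careful that subtrees counted at different levels are genuinely different subtrees of $\pi$, which the ruler construction guarantees) yields $\sum(\text{sizes of left subtrees of }\pi)^p \gtrsim h\,(M-m^p)$, and symmetrically for right subtrees. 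But these actual sizes sum to at most $n$, so by concavity of $x\mapsto x^p$ (power-mean / the fact that $\sum x_j^p$ is maximized, subject to $\sum x_j\le n$ and each $x_j\le |\alpha|<m$, by making as many terms as possible equal to their cap) we get an upper bound on $\sum|\alpha_j^{(i)}|^p$ in terms of $n/m$ and $m^p$. Chaining the lower and upper bounds, together with $|\alpha|^p > M - |R|^p \ge M - m^p$ and the analogous $\beta$ inequality, produces a single inequality purely in the ratios $a=|\alpha|/n$, $b=|\beta|/n$, $r=|R|/n$, $\ell=|L|/n$ and $\delta$.

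The main obstacle — and where the specific exponent $0.438$ enters — is the final optimization: one must verify that the resulting system of inequalities in $a,b,\ell,r$ (with $\delta\to 0$ and $h\to\infty$) has no solution when $p=0.438$. I expect this reduces to checking that a certain explicit function, something like $\max$ over the feasible region of a balance between $(1-m^p/M)$-type terms and $(n/m)\cdot m^p$-type terms, stays below $1$; this is a calculus exercise in one or two variables that can be pinned down by showing the worst case occurs at a boundary configuration (e.g. $|R|=|L|$, $|\alpha|=|\beta|$, and the subtree sizes all equal to the common cap). I would present the clean inequalities symbolically and then assert the numerical check, noting that the choice $p=0.438$ is exactly what makes the margin positive (hence the small-$\delta$, large-$h$ slack is enough), deferring the routine maximization to a short computation. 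The delicate point requiring care is ensuring no subtree of $\pi$ is double-counted across the levels $i=1,\dots,h$ of the ruler pattern, since that is what legitimizes summing the per-level bounds.
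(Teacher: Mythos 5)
Your high-level strategy (assume all four quantities exceed $(1-\delta)n^p$, combine the resulting inequalities with the disjointness of the subtrees of $\pi$, and derive a numerical contradiction) is the same as the paper's, and your care about non-double-counting across levels of the ruler pattern is correctly placed. But the central quantitative step is wrong, in a way that breaks the argument. You claim that, subject to $\sum_j x_j\le n$ and $x_j\le |\alpha|$, the quantity $\sum_j x_j^p$ is maximized ``by making as many terms as possible equal to their cap,'' yielding a bound of order $(n/m)\cdot m^p$. This is backwards: since $x\mapsto x^p$ is concave with value $0$ at $0$, the sum $\sum_j x_j^p$ is Schur-concave, so it is \emph{minimized} by pushing mass to the cap and \emph{maximized} by equalizing over as many terms as possible. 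With $2^i$ terms at level $i$ (about $2^{h+1}$ terms in total), the correct maximum is of order $2^{(h+1)(1-p)}n^p$, which grows exponentially in $h$, whereas your lower bound from summing the $h$ failed Case-3 inequalities with equal weight grows only linearly, as $h\,((1-\delta)n^p-\max\{|L|,|R|\}^p)$. So no contradiction follows. A telling sanity check: your chain of inequalities, if valid, would only need $(1-\delta)n^p-\max\{|L|,|R|\}^p=\Omega(n^p)$, which already follows from the failure of Cases 1 and 2 for \emph{every} $p\in(0,1)$; this would prove the lemma for all $p$, contradicting the paper's $\Omega(n^{0.429})$ lower bound. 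The value $p=0.438$ never genuinely enters your inequalities.

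The missing idea is that the per-level inequalities must be combined with \emph{geometrically decaying} weights rather than equal weights, and the passage from a weighted sum of $p$-th powers to $(\text{total size})^p$ must be done via H\"older's inequality, $\sum_i c_iX_i^p\le(\sum_i c_i^{1/(1-p)})^{1-p}(\sum_i X_i)^p$, rather than via a cap-based extremal configuration. Concretely, the paper forms $X:=\sum_{i=0}^h a_i(\sum_j|\alpha^{(i)}_j|^p+|R|^p)+\sum_{i=0}^h b_i(\cdots)$ with $\sum_i(a_i+b_i)=1$, so that $X>(1-\delta)n^p$ by assumption, and then H\"older gives $X\le\lambda^{1-p}n^p$ with $\lambda$ involving terms $2^ia_i^{1/(1-p)}$; choosing $a_i=b_i=(2^{-(1-p)/p})^ia_0$ makes the level-$i$ contributions summable, and the one-parameter optimization over $a_0$ (with $a_0=0.247$) is where $p=0.438$ is actually needed to get $\lambda<1$. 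Without this weighting-plus-H\"older mechanism your outline cannot be completed.
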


\begin{proof}
  Assume for the sake of contradiction that the lemma 
  is false. Without loss of generality, assume $|R|\ge |L|$.  Let $a_0,\ldots,a_h,b_0,\ldots,b_h$ be positive real numbers with $\sum_{i=0}^h a_i + \sum_{i=0}^h b_i=1$, whose values are to be determined later.  Let
  
\PAPER{  
\[
    X \: := \: a_0 ( |\alpha|^p + |R|^p) +
    b_0 ( | \beta|^p + |L|^p ) + 
    \sum_{i=1}^h a_i \left(
        \sum_{j=1}^{2^i}|\alpha^{(i)}_j|^p+|R|^p
      \right)
      +
      \sum_{i=1}^h b_i \left(
        \sum_{j=1}^{2^i}|\beta^{(i)}_j|^p+|R|^p
      \right).
     \]
}
\LNCS{
\begin{eqnarray*}
    X &:=& a_0 ( |\alpha|^p + |R|^p) +
    b_0 ( | \beta|^p + |L|^p ) + {}\\&&
    \sum_{i=1}^h a_i \left(
        \sum_{j=1}^{2^i}|\alpha^{(i)}_j|^p+|R|^p
      \right)
      +
      \sum_{i=1}^h b_i \left(
        \sum_{j=1}^{2^i}|\beta^{(i)}_j|^p+|R|^p
      \right).
\end{eqnarray*}
}
     By our assumption, $X > (1-\delta)n^p$.  On the other hand,
  by H\"older's inequality,%
  \footnote{H\"older's inequality states that  $\sum_i |x_iy_i|\le \left(\sum_i |x_i|^s\right)^{1/s}  \left(\sum_i |y_i|^t\right)^{1/t}$ for any $s,t>1$ with $\frac 1s+\frac 1t=1$.  In our applications, it is more convenient to set $s=\frac 1p$, $t=\frac 1{1-p}$,
  $x_i=X_i^p$, and $y_i=c_i$, and rephrase the inequality as:
  $\sum_i c_iX_i^p \le \left(\sum_i c_i^{1/(1-p)}\right)^{1-p} \left(\sum_i X_i\right)^p$ for any $0<p<1$ and $c_i,X_i\ge 0$.
  }
  
  \begin{eqnarray*}
      X &=& 
      a_0 |\alpha|^p + b_0 |\beta|^p + b_0 |L|^p +  \left(1-b_0\right)|R|^p +
      \LNCS{{}\\&&}
      \sum_{i=1}^h a_i       \sum_{j=1}^{2^i}|\alpha^{(i)}_j|^p+\sum_{i=1}^h b_i       \sum_{j=1}^{2^i}|\beta^{(i)}_j|^p
      \\
        &\leq& \lambda^{1-p} \left( |\alpha|+|\beta|+|L|+|R| +
          \sum_{j=1}^h \sum_{j=1}^{2^i}|\alpha^{(i)}_j| +
          \sum_{j=1}^h \sum_{j=1}^{2^i}|\beta^{(i)}_j| 
        \right)^p \\
      &\leq& \lambda^{1-p}n^p,
  \end{eqnarray*}
  where
  
  \begin{equation*}
    \label{eqn:lambda-1}
    \lambda\: :=\:
      a_0^\frac{1}{1-p} + 2b_0^{\frac{1}{1-p}} +  
      \left( 1- b_0
      \right)^\frac{1}{1-p} +
       \sum_{i=1}^h 2^i a_i^\frac{1}{1-p} +
       \sum_{i=1}^h 2^i b_i^\frac{1}{1-p}.
  \end{equation*}
  Thus, we have $\lambda^{1-p} > 1-\delta$. 
  However, we show that this is not true for some choice of parameters. 

  We first set $a_i=b_i=(2^{-\frac{1-p}{p}})^i a_0$ for $1\leq i\leq h$ (by calculus, this choice is actually the best for minimizing $\lambda$). 
  Let $\rho=1+2\sum_{i=1}^{h}(2^{-\frac{1-p}{p}})^i$ and
  $b_0=1-\rho a_0$.  Then we indeed have $\sum_{i=0}^h a_i + \sum_{i=0}^h b_i=1$, and the above expression simplifies to
  $\lambda=\rho a_0^\frac{1}{1-p}+2(1-\rho a_0)^\frac{1}{1-p}+ (\rho a_0)^\frac{1}{1-p}$.
  For $p=0.438$, the limit of $\rho$ (as $h\rightarrow\infty$) is $2/(1-2^{-\frac{1-p}{p}})-1\approx 2.395068$.  
  We can plug in $a_0=0.247$ and
  verify (using a calculator) that the limit of
  $\lambda$ is less than $0.9984$, which leads to a contradiction for
  a sufficiently small
  $\delta$ and a sufficiently large $h$.
\end{proof}

\section{Slightly Improved Upper Bound}\label{sec:slight}

In this section, we describe a refinement of our algorithm to further improve the width upper bound to $O(n^{0.437})$.  Although the improvement is tiny, the main purpose is to show that our algorithm is not optimal.

The change lies in the procedure of $i$-right-twist in Case~3, specifically, how $L$ is drawn.  Instead of vertically aligning the leftmost path in $L$, we choose a different path, exploiting the already ``used'' width from the drawing of $\alpha^{(i)}_{2^i}$ that is available to the left of the root of $L$.  We define a new path $\pi'=\Seq{u_0,u_1,\ldots}$ in $L$ as follows.  Initially, set $u_0$ to the root of $L$. For $k=0,1,\ldots$ (until $u_k$ is a leaf), if the left subtree of $u_k$ has size at most
$|\alpha^{(i)}_{2^i}|$, then set $u_{k+1}$ to be the right child of $u_k$;
otherwise, set $u_{k+1}$ to be the left child of $u_k$ (see Figure \ref{fig:left-twist-adv}).
This way, every left subtree of $\pi'$ has size at most $|\alpha^{(i)}_{2^i}|$, and every right subtree of $\pi'$ has size less than $|L|-|\alpha^{(i)}_{2^i}|$.
(Note that $|\alpha^{(i)}_{2^i}| < |L|$.)
We draw $L$ by vertically aligning the path $\pi'$, and recursively drawing the left and right subtrees of $\pi'$.
The overall LR drawing of $T$ has width at most
$\sum_{j=1}^{2^i}c|\alpha^{(i)}_j|^p+c\max\{|L|-|\alpha^{(i)}_{2^i}|,|R|\}^p + 2^{h}$.
(Parts of the drawing may have width bounded instead by 
$\sum_{j=1}^{2^i-1}c|\alpha^{(i)}_j|^p + c\max\{|L|,|R|\}^p + 2^h$,
but this is no worse than the above bound since $|L|^p\le |\alpha^{(i)}_{2^i}|^p + (|L|-|\alpha^{(i)}_{2^i}|)^p$.)
Thus, we can relax the condition in Case 3 to
$\sum_{j=1}^{2^i}c|\alpha^{(i)}_j|^p+c\max\{|L|-|\alpha^{(i)}_{2^i}|,|R|\}^p \le (1-\delta)n^p$.

\begin{figure}
  \centering
  \includegraphics[width=.24\linewidth]{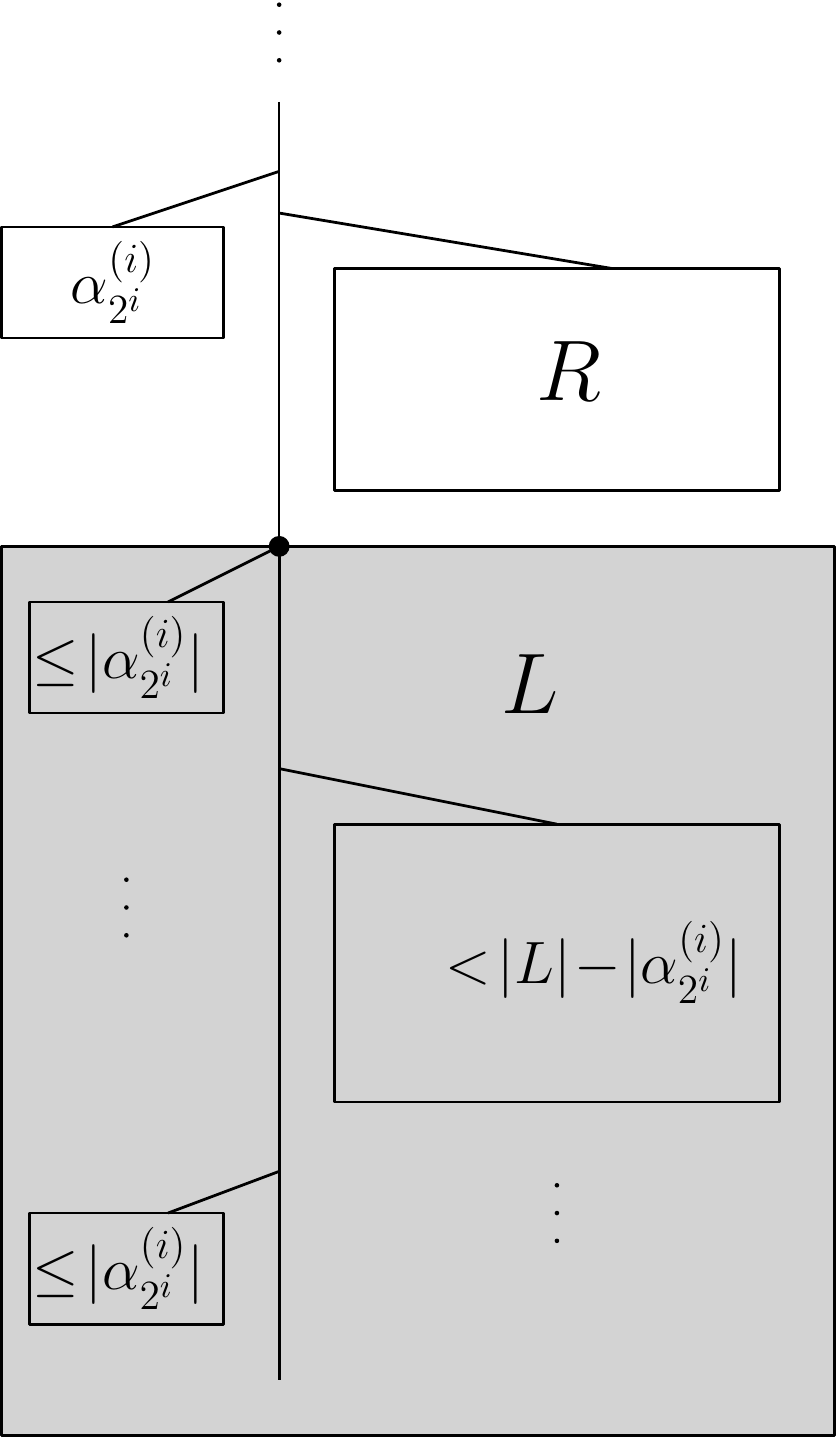}
  \caption{Choosing a path $\pi'$ inside $L$}
  \label{fig:left-twist-adv}
\end{figure}

With a similar modification to the $i$-left-twist procedure,
we can relax the condition in Case~4 to
$\sum_{j=1}^{2^i}c|\beta^{(i)}_j|^p+c\max\{|L|,|R|-|\beta^{(i)}_{2^i}|\}^p \le (1-\delta)n^p$.

It suffices to prove the following variant of Lemma~\ref{lem:complete}:

\begin{lemma}
  \label{lem:complete:refined}
  For $p=0.437$ and a sufficiently small constant $\delta>0$ and a sufficiently large constant~$h$,

  \begin{equation*}
    \label{eqn:complete:refined}
    \min\Set{\begin{array}{l}
      |\alpha|^p+|R|^p, \\
      | \beta|^p+|L|^p, \\
      \displaystyle \min_{i=1}^h \left(      \sum_{j=1}^{2^i} |\alpha^{(i)}_j|^p + \max\{|L|-|\alpha^{(i)}_{2^i}|,|R|\}^p\right), \\
      \displaystyle \min_{i=1}^h \left( \sum_{j=1}^{2^i} | \beta^{(i)}_j|^p + \max\{|L|,|R|-|\beta^{(i)}_{2^i}|\}^p \right)
    \end{array}} \:\leq\: (1-\delta)n^p.
  \end{equation*}
\end{lemma}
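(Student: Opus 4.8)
The plan is to imitate the proof of Lemma~\ref{lem:complete} almost line for line, with two changes: the four quantities are replaced by their relaxed versions, and the constants are re-tuned for $p=0.437$. Assume for contradiction that all four relaxed quantities exceed $(1-\delta)n^p$. The relaxed conditions are still invariant under swapping the roles of left and right (the relaxed Case~3 condition becomes the relaxed Case~4 condition under this swap), so I may assume without loss of generality that $|R|\ge |L|$. Then $|L|-|\alpha^{(i)}_{2^i}|\le |L|\le |R|$, so $\max\{|L|-|\alpha^{(i)}_{2^i}|,|R|\}=|R|$ and the relaxed Case~3 quantity coincides with the one in Lemma~\ref{lem:complete}; the only genuinely new feature is the relaxed Case~4 quantity, whose tail term $\max\{|L|,\,|R|-|\beta^{(i)}_{2^i}|\}^p$ is at most $|R|^p$ but may be strictly smaller. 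This slack is exactly what should buy the reduction of the exponent.

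I would then form the same weighted combination as in Lemma~\ref{lem:complete},
\[
  X \;:=\; a_0(|\alpha|^p+|R|^p) + b_0(|\beta|^p+|L|^p) + \sum_{i=1}^h a_i\Bigl(\textstyle\sum_{j=1}^{2^i}|\alpha^{(i)}_j|^p+|R|^p\Bigr) + \sum_{i=1}^h b_i\Bigl(\textstyle\sum_{j=1}^{2^i}|\beta^{(i)}_j|^p+\max\{|L|,|R|-|\beta^{(i)}_{2^i}|\}^p\Bigr),
\]
with $\sum_{i=0}^h a_i + \sum_{i=0}^h b_i = 1$, so that $X>(1-\delta)n^p$. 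As in Lemma~\ref{lem:complete}, the upper bound on $X$ comes from H\"older's inequality; the point requiring care is that there the $|R|^p$ contributions all merge into a single term $(1-b_0)|R|^p$, whereas here the last sum produces the $i$-dependent quantities $|R|-|\beta^{(i)}_{2^i}|$ that do not merge with $|R|$. To deal with this I would split $\{1,\dots,h\}$ into $S=\{i:\max\{|L|,|R|-|\beta^{(i)}_{2^i}|\}=|L|\}$ and its complement. Since the sizes $|\beta^{(i)}_{2^i}|$ are non-increasing in $i$ --- each $\beta^{(i+1)}_{2^{i+1}}$ is the largest right subtree of a segment of $\pi$ contained in the segment defining $\beta^{(i)}_{2^i}$ --- the set $S$ is an initial block $\{1,\dots,i^\ast-1\}$. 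For $i\in S$ the tail term equals $|L|^p$ and merges with the $b_0|L|^p$ term, so the effect of the relaxation is to transfer weight from the $|R|^p$-slot to the $|L|^p$-slot; since the coefficient on $|R|^p$ is larger than that on $|L|^p$ and $t\mapsto t^{1/(1-p)}$ is convex, this transfer strictly decreases the resulting value of $\lambda$. For $i\notin S$ I would, at worst, fall back on $\max\{|L|,|R|-|\beta^{(i)}_{2^i}|\}\le|R|$, recovering exactly the Lemma~\ref{lem:complete} bound for those indices, while noting that whenever $|\beta^{(i)}_{2^i}|$ is bounded below one can instead merge $b_i|\beta^{(i)}_{2^i}|^p$ (already present inside $\sum_j|\beta^{(i)}_j|^p$) with $b_i(|R|-|\beta^{(i)}_{2^i}|)^p$, charging a single copy of $|R|$ worth of tree mass; the pairwise disjointness of the subtrees $\beta^{(i)}_{2^i}$ over distinct~$i$ is what keeps the total H\"older pool at most~$n$. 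The outcome is $X\le\lambda^{1-p}n^p$ with $\lambda$ no larger than --- and, because of the weight transfer on $S$, strictly smaller than --- the $\lambda$ of Lemma~\ref{lem:complete}.

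Finally I would set $a_i=b_i=(2^{-(1-p)/p})^i a_0$ and $b_0=1-\rho a_0$ with $\rho=1+2\sum_{i=1}^h (2^{-(1-p)/p})^i$, exactly as in Lemma~\ref{lem:complete}, write out the closed form of $\lambda$ for the modified combination (which differs from the one in Lemma~\ref{lem:complete} only in how the $|L|^p$/$|R|^p$ coefficients are split once the weight of $S$ is moved), and verify numerically that for $p=0.437$, an appropriate $a_0$ near $0.247$, and $h\to\infty$ one gets $\lambda<1$, hence $\lambda^{1-p}<1-\delta$ for a sufficiently small $\delta$ and a sufficiently large $h$ --- contradicting $X>(1-\delta)n^p$.

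The step I expect to be the main obstacle is obtaining a clean, provably favorable bound for the indices $i\notin S$: there the relaxation only gives $(|R|-|\beta^{(i)}_{2^i}|)^p$ with no a priori lower bound on $|\beta^{(i)}_{2^i}|$, so one must argue that the worst configuration for the contradiction is in fact the all-$S$ one (or otherwise absorb the complement's contribution into $|R|^p$ per index and still close the computation), and the accompanying H\"older bookkeeping --- ensuring no part of $R$ or of any $\beta^{(i)}_{2^i}$ is charged twice --- is the delicate part. The numerical verification at $p=0.437$ is routine once the closed form of $\lambda$ is in hand.
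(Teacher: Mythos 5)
Your outline follows the paper's proof almost exactly: the same reduction WLOG to $|R|\ge|L|$ (so the relaxed Case~3 term collapses to the old one), the same observation that $|\beta^{(i)}_{2^i}|$ is non\-increasing in $i$ so that the set where $\max\{|L|,|R|-|\beta^{(i)}_{2^i}|\}=|L|$ is an initial block $\{1,\ldots,i^*-1\}$, the same weighted combination $X$ with the $b_i|L|^p$ contributions for $i<i^*$ merged into a single H\"older slot, and a final numerical check over the possible values of $i^*$. The structure is right.

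However, the step you flagged as the main obstacle --- the indices $i\ge i^*$ --- is exactly where the proof must do real work, and neither of your two proposed resolutions closes it as stated. Falling back on $\max\{|L|,|R|-|\beta^{(i)}_{2^i}|\}^p\le|R|^p$ for those indices just reproduces Lemma~\ref{lem:complete}, which only supports $p=0.438$; in the case $i^*=1$ (so $S=\emptyset$) this gives no improvement at all, so that case cannot be dismissed, and it is also not true that the ``all-$S$'' configuration is the worst one --- it is the most favorable. The paper's fix is a one-line H\"older trick with a fixed parameter $0<\gamma<1$: write $|\beta^{(i)}_{2^i}|^p+(|R|-|\beta^{(i)}_{2^i}|)^p=(1-\gamma)|\beta^{(i)}_{2^i}|^p+\bigl(\gamma|\beta^{(i)}_{2^i}|^p+(|R|-|\beta^{(i)}_{2^i}|)^p\bigr)\le(1-\gamma)|\beta^{(i)}_{2^i}|^p+(\gamma^{1/(1-p)}+1)^{1-p}|R|^p$, which needs no lower bound on $|\beta^{(i)}_{2^i}|$. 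Its effect on $\lambda$ is to replace the coefficient $2^i$ of $b_i^{1/(1-p)}$ by $2^i-1+(1-\gamma)^{1/(1-p)}$ while inflating the weight of $b_i$ in the $|R|^p$ slot by the factor $(1+\gamma^{1/(1-p)})^{1-p}>1$; there is no double-counting, since $R$ and the disjoint subtrees $\beta^{(i)}_j$ each appear once in the final H\"older pool, exactly as you anticipated. With $\gamma=0.1$, $h=7$, and the weights you propose, the paper then verifies $\lambda<1$ by computer separately for each $i^*\in\{1,\ldots,8\}$. So your plan is salvageable, but this explicit $\gamma$-split (or an equivalent device) is the missing ingredient, and the per-$i^*$ case check replaces the hoped-for monotonicity argument.
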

\begin{proof}
  Assume for the sake of contradiction that the lemma 
  is false. Without loss of generality, assume $|R|\ge |L|$.  
  Note that $|\beta^{(i)}_{2^i}|$ decreases with $i$. Let $i^*$ be the smallest integer with $1\le i^*\le h$ such that $|R|-|\beta^{(i^*)}_{2^{i^*}}|\geq|L|$ (if such an integer does not exist, set $i^*=h+1$).
  
  Let $a_0,\ldots,a_h,b_0,\ldots,b_h$ be positive real numbers with $\sum_{i=0}^h a_i+ \sum_{i=0}^h b_i=1$, whose values are to be determined later.  Let
  
  \begin{equation*}
  \begin{aligned}
    X\: := \: \sum_{i=0}^h a_i \left( \sum_{j=1}^{2^i}|\alpha^{(i)}_j|^p+|R|^p \right)
    &+ \sum_{i=0}^{i^*-1} b_i \left( \sum_{j=1}^{2^i}|\beta^{(i)}_j|^p+|L|^p \right) \\
    &+ \sum_{i=i^*}^h b_i \left( \sum_{j=1}^{2^i}|\beta^{(i)}_j|^p+(|R|-|\beta^{(i)}_{2^i}|)^p \right).   
  \end{aligned}
\end{equation*}
     By our assumption, $X > (1-\delta)n^p$.  On the other hand,
  by H\"older's inequality, for any $0<\gamma<1$,
  
\begin{equation*}
  \begin{aligned}
    |\beta^{(i)}_{2^i}|^p+(|R|-|\beta^{(i)}_{2^i}|)^p
    &= (1-\gamma)|\beta^{(i)}_{2^i}|^p+\gamma|\beta^{(i)}_{2^i}|^p+(|R|-|\beta^{(i)}_{2^i}|)^p \\
    &\le (1-\gamma)|\beta^{(i)}_{2^i}|^p+(\gamma^\frac{1}{1-p}+1)^{1-p}|R|^p,
  \end{aligned}
\end{equation*}
and by H\"older's inequality again,
  \begin{eqnarray*}
    X &\: \le \: & \sum_{i=0}^h a_i \sum_{j=1}^{2^i}|\alpha^{(i)}_j|^p
    + \sum_{i=0}^{i^*-1} b_i \sum_{j=1}^{2^i}|\beta^{(i)}_j|^p
    + \left( \sum_{i=0}^{i^*-1}b_i \right) |L|^p \LNCS{\\&&}
    + \sum_{i=i^*}^h b_i \left(\sum_{j=1}^{2^i-1}|\beta^{(i)}_j|^p + (1-\gamma) |\beta^{(i)}_{2^i}|^p\right) \\
    &&+ \left( \sum_{i=0}^h a_i +  (1+\gamma^\frac{1}{1-p})^{1-p} \sum_{i=i^*}^h b_i \right) |R|^p\\  &\leq & \lambda^{1-p}n^p,
  \end{eqnarray*}
where
\begin{equation*}
  \begin{aligned}
    \lambda \: := \:
    \sum_{i=0}^h 2^i a_i^\frac{1}{1-p} &+
    \sum_{i=0}^{i^*-1} 2^i b_i^\frac{1}{1-p} +
    \left( \sum_{i=0}^{i^*-1} b_i \right)^\frac{1}{1-p} +
    \sum_{i=i^*}^h (2^i-1+(1-\gamma)^\frac{1}{1-p})b_i^\frac{1}{1-p} \\ & +
    \left( \sum_{i=0}^h a_i +
         (1+\gamma^\frac{1}{1-p})^{1-p}\sum_{i=i^*}^h b_i \right)^\frac{1}{1-p}.
  \end{aligned}
\end{equation*}

An optimal choice of parameters is now messier to describe, but will not be necessary.
We can reuse our earlier choice with
$a_0=0.247$, $a_i=b_i=(2^{-\frac{1-p}{p}})^ia_0$ for $1\le i\le h$, and $b_0=1-\sum_{i=0}^h a_i - \sum_{i=1}^h b_i$.  For $p=0.437$, $\gamma=0.1$, and $h=7$, we can verify (with a short computer program) that for each possible $i^*\in\{1,\ldots,8\}$,  $\lambda$ evaluates to strictly less than~1, a contradiction.
\end{proof}

\begin{theorem}
For any binary tree with $n$ nodes, there exists an LR drawing with $O(n^{0.437})$ width.
\end{theorem}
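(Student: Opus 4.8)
The plan is to simply run the recursive drawing procedure of Section~\ref{sec:slight} --- that is, the algorithm of Section~2 but with the $i$-right-twist (and its mirror image in the $i$-left-twist) using the auxiliary path $\pi'$ inside $L$ (resp.\ $R$), and with the relaxed Case~3 and Case~4 conditions --- and to prove by induction on $n$ that the drawing it produces has width at most $cn^p$ with $p=0.437$. First I would fix $p=0.437$ and then choose the constant $\delta>0$ small enough and the constant $h\in\Natural$ large enough that Lemma~\ref{lem:complete:refined} applies; the proof of that lemma already exhibits concrete admissible choices (e.g.\ $\gamma=0.1$, $h=7$). Finally I would take $c$ large enough (depending on $p,\delta,h$) so that $c(1-\delta)n^p+2^h<cn^p$ for all $n$ above a constant --- this holds as soon as $c>2^h/(\delta n^p)$ --- and so that the finitely many constant-size trees can be drawn arbitrarily within width $c$; this handles the base case.

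For the inductive step I would maintain the maintained path $\pi=\Seq{v_0,\ldots,v_t}$ together with the invariant $|\alpha|^p+|\beta|^p\le(1-\delta)n^p$, where $\alpha$ and $\beta$ are the largest left and right subtrees of $\pi$. If $v_t$ is a leaf, the subtrees of $\pi$ are drawn recursively and stacked along the vertical line, giving width at most $c|\alpha|^p+c|\beta|^p+1\le c(1-\delta)n^p+1<cn^p$ by the induction hypothesis and the invariant. Otherwise, Lemma~\ref{lem:complete:refined} guarantees that at least one of Cases~1--4 is triggered. In Cases~1 and~2 the path is extended by one node and only one subtree ($R$, resp.\ $L$) is added to the side subtrees of $\pi$, so the new quantity $|\alpha'|^p+|\beta'|^p$ is at most $\max\{|\alpha|^p+|\beta|^p,\ |\alpha|^p+|R|^p\}$ (resp.\ the symmetric bound), which is $\le(1-\delta)n^p$ by the old invariant together with the triggering inequality; the invariant is thus preserved and no recursion is performed yet. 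In Cases~3 and~4 I would check that the restructured drawing --- recursively drawing $R$, the twisted subtrees $\alpha^{(i)}_j$ via their leftmost paths, and $L$ via the auxiliary path $\pi'$ --- has width at most $\sum_{j=1}^{2^i}c|\alpha^{(i)}_j|^p+c\max\{|L|-|\alpha^{(i)}_{2^i}|,|R|\}^p+2^h$ (and the mirror bound in Case~4); here the only nonroutine point is the inequality $|L|^p\le|\alpha^{(i)}_{2^i}|^p+(|L|-|\alpha^{(i)}_{2^i}|)^p$, already recorded in Section~\ref{sec:slight}, which shows the alternative width bound on parts of the drawing is no worse. By the relaxed Case~3/Case~4 condition this is at most $c(1-\delta)n^p+2^h<cn^p$, completing the induction.

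The main obstacle I anticipate is not any single estimate but the bookkeeping needed to certify (i) that the recursion is well-founded, so the induction hypothesis is legitimately invoked, and (ii) that the geometric construction of Cases~3--4 really is a valid LR drawing --- planar, straight-line, strictly upward, order-preserving --- with the stated width. For well-foundedness I would observe that, in the reduction leading to Cases~3 and~4, the failure of Cases~1 and~2 gives $|\beta|<|R|$ and $|\alpha|<|L|$, so every side subtree of $\pi$ has size at most $\max\{|\alpha|,|\beta|\}<\max\{|L|,|R|\}\le n-1$, and each of $L$, $R$, and the subtrees $\alpha^{(i)}_j,\beta^{(i)}_j$ is a proper subtree of $T$; in Cases~1--2 the algorithm only recurses once $v_t$ eventually becomes a leaf, at which point the subtrees of $\pi$ are all proper. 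Since exhaustiveness of the four cases is exactly Lemma~\ref{lem:complete:refined}, assembling these observations yields that the procedure terminates and outputs an LR drawing of width $O(n^{0.437})$, which is the theorem.
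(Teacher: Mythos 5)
Your proposal is correct and follows essentially the same route as the paper: the theorem is exactly the combination of the refined algorithm of Section~\ref{sec:slight}, the per-case width bounds already established there, and Lemma~\ref{lem:complete:refined} certifying that the four cases are exhaustive, assembled into the induction on $n$ just as you describe. Your additional bookkeeping on invariant preservation in Cases~1--2 and on well-foundedness matches what the paper leaves implicit.
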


\begin{remark}
It is not difficult to implement the algorithm to construct the drawing in $O(n)$ time.
\end{remark}

\section{Lower Bound}

We now prove an $\Omega(n^{0.429})$ lower bound on the width of LR drawings. Our proof is largely based on Frati, Patrignani, and Roselli's~\cite{FratiPR20}; we show that a simple variation of their proof is sufficient to improve their $\Omega(n^{0.418})$ lower bound.

\subsection{Tree Construction}

For any given positive integer $n$, we describe a recursive construction of a binary tree $T_n$
with $n$ nodes and show by induction that any LR drawing of $T_n$ has width at least $cn^p$, where $p$ and $c>0$ are constants to be determined later.

\begin{figure}[H]
  \centering
  \includegraphics[width=.36\linewidth]{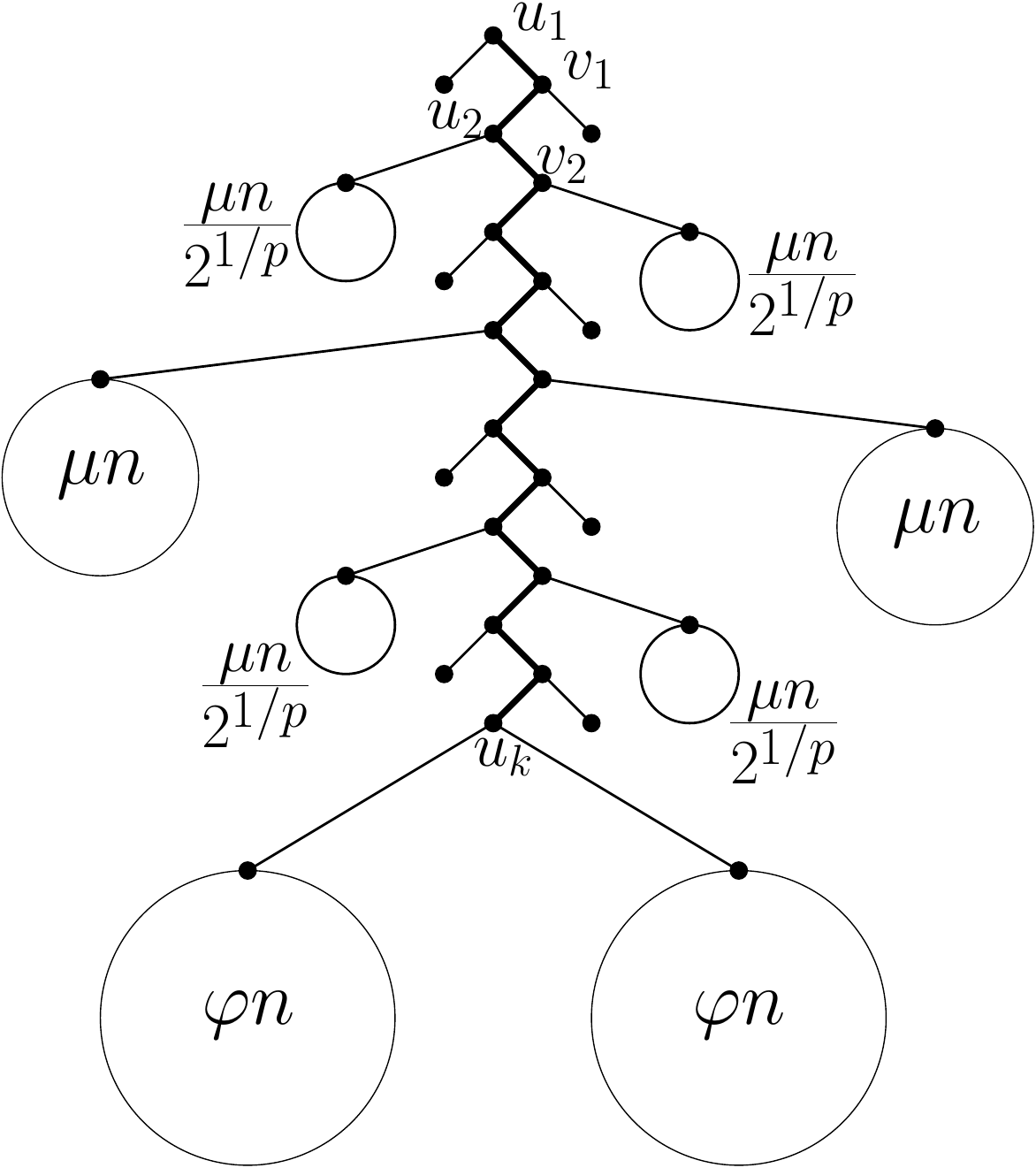}
  \caption{Tree construction for the lower bound}
  \label{fig:lower-bound-tree}
\end{figure}


For $n$ smaller than a sufficiently large constant, we can construct $T_n$ arbitrarily. Otherwise, let
$h$, $\varphi$, and $\mu$ be parameters, to be chosen later.
We construct a tree $T_n$ containing a path $\pi=\Seq{u_1,v_1,u_2,v_2,\ldots,u_{k-1},v_{k-1},u_k}$,
where $k=2^h$ and $u_1$ is the root. The left and right subtree of $u_k$, which we denote by $L$ and $R$, are recursively constructed trees each with $\ceil{\varphi n}$ nodes.  

We will add left subtrees 
$\alpha_1,\ldots,\alpha_{k-1}$ to $u_1,\ldots,u_{k-1}$ and right subtrees $\beta_1,\ldots,\beta_{k-1}$ to $v_1,\ldots,v_{k-1}$.  Specifically, the subtrees $\alpha_{k/2}$ and $\beta_{k/2}$, which are said to be at {\it level $0$}, are recursively constructed trees each with $\Ceil{\mu n}$ nodes.
The subtrees  $\alpha_{k/4},\alpha_{3k/4}$
and $\beta_{k/4},\beta_{3k/4}$, which are at {\it level $1$}, are recursively constructed trees each with $\Ceil{2^{-1/p}\mu n}$ nodes.
Extend the process analogously: For each $i\le h-2$, 
the $2^i$ left subtrees  $\alpha_{k/2^{i+1}},\alpha_{3k/2^{i+1}},\alpha_{5k/2^{i+1}},\ldots$ 
and $2^i$ right subtrees  $\beta_{k/2^{i+1}},\beta_{3k/2^{i+1}},\beta_{5k/2^{i+1}},\ldots$,
which are at {\it level $i$}, are recursively constructed trees each with $\Ceil{(2^{-1/p})^i\mu n}$ nodes.
As shown in Figure \ref{fig:lower-bound-tree}, these subtrees of $\pi$ form a ``ruler pattern'' (which somewhat resembles the ruler pattern from our upper bound proof, coincidentally or not).

We set $h=\Floor{p\log(\mu n/c_0)}$ for a sufficiently large constant $c_0$, and choose parameters $\varphi$ and $\mu$ to satisfy

\begin{equation}
  \label{eqn:linear-constraint}
  \varphi + \frac{\mu}{1-2^{-\frac{1-p}{p}}} \,=\, \frac{1}{2}.
\end{equation}
Then the total size of the left subtrees at level $0,\ldots,h-2$ is
\begin{eqnarray*}
    \sum_{i=0}^{h-2} 2^i\Ceil{(2^{-1/p})^i\mu n}
&=&\sum_{i=0}^{h-2} (2^{-\frac{1-p}{p}})^i\mu n + O(k)
\PAPER{\ =\ }
\LNCS{\\&=&}
\frac{\mu n}{1-2^{-\frac{1-p}{p}}} - \Theta((2^{-\frac{1-p}{p}})^h \mu n) + O(k)\\
&=& \left(\tfrac12 - \varphi\right)n - \Theta(c_0k).
\end{eqnarray*}
The same bound holds for the right subtrees at level $0,\ldots,h-2$.
Thus, we can distribute
$\Theta(c_0)$ nodes to each of the $\Theta(k)$ subtrees at the last level $h-1$ so that $|T_n|$ is exactly~$n$.

\subsection{Analysis}

We begin with a simple property arising from the ruler pattern:

\begin{lemma}
  \label{lem:subtrees}
  For any set $J\subseteq \{1,\ldots,k-1\}$ of consecutive integers, the largest subtree $\alpha_j$ (or $\beta_j$) with $j\in J$ has size at least $ \left(\frac{|J|-1}{k} \right)^{1/p}\mu n$.
\end{lemma}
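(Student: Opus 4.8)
The plan is to exploit the self-similar ``ruler'' structure of the left subtrees $\alpha_j$ along $\pi$. Recall that $\alpha_{k/2}$ is at level $0$ with size $\Ceil{\mu n}$, the two subtrees $\alpha_{k/4},\alpha_{3k/4}$ are at level $1$ with size $\Ceil{(2^{-1/p})\mu n}$, and in general the $2^i$ level-$i$ subtrees sit at the odd multiples of $k/2^{i+1}$ and have size $\Ceil{(2^{-1/p})^i \mu n}\ge (2^{-1/p})^i\mu n$. The key combinatorial fact I want is: among any block $J$ of consecutive indices in $\{1,\dots,k-1\}$, there is always a subtree of a sufficiently low level, and a low-level subtree is large.

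First I would prove the following claim: if $|J|-1\ge k/2^{i}$ for some $i\le h-1$ (equivalently $|J|\ge k/2^i + 1$), then $J$ contains the index of some subtree of level at most $i-1$. Indeed, the level-$(i-1)$ subtrees are located exactly at the odd multiples of $k/2^i$; consecutive such positions differ by $2\cdot (k/2^i) = k/2^{i-1}$, but crucially, if we also throw in all the higher-level (lower-index-spacing is finer) subtrees, the positions of subtrees of level $\le i-1$ are precisely the odd multiples of $k/2^{i}$ together with coarser ones — more simply, the multiples of $k/2^{i}$ that are not multiples of... — here I need to be slightly careful, so let me instead argue directly: the level-$0$ through level-$(i-1)$ subtrees occupy the positions $\{\, m\cdot k/2^{i} : 1\le m\le 2^{i}-1\,\}$ (these are exactly the nodes at which $\pi$ is subdivided when building the ruler up to level $i-1$). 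Any window of $k/2^{i}+1$ consecutive integers inside $\{1,\dots,k-1\}$ must contain one of these equally-spaced positions, since the gaps between them (and between them and the endpoints $0$, $k$) are exactly $k/2^{i}$. Hence $J$ contains some $\alpha_j$ of level $\le i-1$.

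Now I finish by choosing $i$ optimally. Given $J$, let $i$ be the smallest nonnegative integer with $k/2^{i}+1 \le |J|$, i.e. $2^{i}\ge k/(|J|-1)$, so we may take $i = \Ceil{\log_2(k/(|J|-1))}$, which satisfies $2^{i} < 2k/(|J|-1)$. By the claim, $J$ contains a subtree of level at most $i-1$, hence of size at least
\[
(2^{-1/p})^{\,i-1}\mu n \;=\; 2^{1/p}\cdot 2^{-i/p}\,\mu n \;>\; 2^{1/p}\Bigl(\tfrac{|J|-1}{2k}\Bigr)^{1/p}\mu n \;=\; \Bigl(\tfrac{|J|-1}{k}\Bigr)^{1/p}\mu n,
\]
using $2^{i}<2k/(|J|-1)$ in the middle step; the factor $2^{1/p}$ from level $i-1$ versus level $i$ exactly cancels the factor $2^{1/p}$ lost in the bound $2^{i}<2k/(|J|-1)$. (Edge cases: if $|J|-1 \le k/2^{h-1}$ one should use $i$ up to $h-1$ and invoke the level-$(h-1)$ subtrees, but since those have size $\Theta(c_0)$ the bound still holds after absorbing constants, or one can simply note $|J|\le k$ makes the target bound at most $\mu n$, matched by the level-$0$ subtree $\alpha_{k/2}$ whenever $J$ reaches the middle; if $J$ does not reach a low enough position the window argument above still applies.) The same argument verbatim gives the statement for the $\beta_j$'s.

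\medskip
\noindent\textbf{Main obstacle.} The only delicate point is the precise geometry of which indices host subtrees of each level and verifying that a window of $k/2^i+1$ consecutive indices is guaranteed to hit a level-$\le(i-1)$ position — this requires pinning down that those positions are exactly the integer multiples of $k/2^i$ in $(0,k)$, which follows from $k=2^h$ and the dyadic subdivision, but must be stated cleanly. Everything after that is the short exponent computation above, where the two stray factors of $2^{1/p}$ conveniently cancel.
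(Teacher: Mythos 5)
Your proof is correct and is essentially the paper's argument: the paper likewise observes that the subtrees of level at most $i$ sit exactly at the multiples of $k/2^{i+1}$ in $\{1,\dots,k-1\}$, picks the dyadic level matching $|J|$, and performs the same exponent computation. The only difference is that you demand $|J|\ge k/2^{i}+1$ consecutive indices to guarantee hitting a multiple of $k/2^{i}$, whereas $|J|\ge k/2^{i}$ already suffices (any $m$ consecutive integers contain a multiple of $m$); using this sharper count, as the paper does, keeps the selected subtree at level at most $h-2$ and removes the $\Theta(c_0)$ last-level edge case you flag at the end.
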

\begin{proof}
  We may assume $|J|\ge 2$ (for otherwise the inequality is trivial).
  Say $k/2^{i+1}\le |J| < k/2^{i}$.  The subtrees $\alpha_j$ at level at most $i$ are precisely
  those with indices $j$ divisible by $k/2^{i+1}$; 
  there exists one such index with $j\in J$.
  The size of $\alpha_j$ and $\beta_j$ is at least
  $(2^{-1/p})^i \mu n
  \ge (|J|/k)^{1/p} \mu n$.
  \qed
\end{proof}

Assume inductively that any LR drawing of $T_{n'}$ has width at least $c(n')^p$, for all $n'<n$.
Let $T_n(u_j)$ denote the subtree of $T_n$ rooted at node $u_j$.  We will prove the following claim, for $c$ sufficiently small:

\begin{claim}
  \label{claim:wst-indct}
  For $j\in\{1,\ldots,k\}$,  every LR drawing $\Gamma$ of $T_n(u_j)$
  has width at least

  \begin{equation*}
    \tfrac{k-j+1}{k} c(\mu n)^p +c(\varphi n)^p.
  \end{equation*}
\end{claim}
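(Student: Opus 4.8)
The plan is to prove the claim by downward induction on $j$, from $j=k$ down to $j=1$. The base case $j=k$ is immediate: $T_n(u_k)$ has $L$ and $R$ as its left and right subtrees, each with $\ceil{\varphi n}$ nodes, so by the outer induction hypothesis on smaller trees, any LR drawing has width at least $c(\varphi n)^p$, which matches the claimed bound $\tfrac{1}{k}c(\mu n)^p + c(\varphi n)^p$ once we observe $\tfrac1k c(\mu n)^p$ is a small additive term we can absorb — actually we need the full bound, so the base case should instead directly account for the subtree $\alpha_{?}$ or $\beta_{?}$ hanging near $u_k$; I would set up the indices so that the bound at $j=k$ reads exactly $c(\varphi n)^p$ plus the contribution $\tfrac1k c(\mu n)^p$ coming from the nearest ruler subtree, or simply verify $\tfrac{1}{k}c(\mu n)^p + c(\varphi n)^p \le c(\varphi n)^p + c(\mu n/k)^p \le$ width by a direct argument. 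For the inductive step, fix an LR drawing $\Gamma$ of $T_n(u_j)$ and look at the root-to-leaf path $\pi_\Gamma$ that $\Gamma$ draws as a vertical line.

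The heart of the argument is a case analysis on where $\pi_\Gamma$ goes. Either $\pi_\Gamma$ passes through $u_{j+1}$ (equivalently, $\pi_\Gamma$ does not turn into $\alpha_j$), or it turns off into $\alpha_j$. In the first case, $\Gamma$ restricted to $T_n(u_{j+1})$ is itself an LR drawing of $T_n(u_{j+1})$, and the subtree $\alpha_j$ is a left subtree of $\pi_\Gamma$, so it is drawn recursively and contributes its own width plus $1$ to the left of the vertical line; this gives width at least $\max\{c|\alpha_j|^p + 1,\ \text{(width of }\Gamma\text{ on }T_n(u_{j+1}))\}$ — but we actually want these two contributions to \emph{add}, which they do only if $\alpha_j$ sits to the left and $T_n(u_{j+1})$'s drawing sits appropriately; here I would invoke the structure that in an LR drawing the left subtrees of $\pi_\Gamma$ occupy columns strictly to the left of the vertical line while the vertical line and everything hanging right of it occupies the remaining columns, so the total width is at least (width contributed by left subtrees of $\pi_\Gamma$) $+$ (something). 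The cleaner route, following Frati et al., is: the width of $\Gamma$ is at least $c|\alpha_j|^p + 1 + W$ is too strong; instead, one shows width $\ge$ (max over left subtrees $\alpha$ of $\pi_\Gamma$ of $c|\alpha|^p$) $+$ (width of the part of $\Gamma$ drawing the path from $u_{j+1}$ onward, viewed suitably). I would therefore restructure the induction to track the right-hand contribution separately. Concretely: by the alternative/recursive definition of LR width, width$(\Gamma) \ge c|\alpha|^p + 1 + $ width of the sub-drawing of whichever of the two subtrees at the bottom of $\pi_\Gamma$ lies underneath — and then use Lemma~\ref{lem:subtrees} to lower-bound the largest $\alpha_j$ among the consecutive block of indices $\{j, j+1, \dots\}$ that $\pi_\Gamma$ traverses before turning.

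So the real plan: let $J$ be the set of indices $\ell \ge j$ such that $\pi_\Gamma$ passes through $u_\ell$ (a consecutive block $\{j, j+1, \ldots, m\}$), and suppose $\pi_\Gamma$ turns off at $u_m$ (into $\alpha_m$ or into $L$ if $m$ relates to $k$). Among $\alpha_j, \ldots, \alpha_{m-1}$, Lemma~\ref{lem:subtrees} gives one of size at least $\big(\tfrac{m-j-1}{k}\big)^{1/p}\mu n$, contributing width at least $c\big(\tfrac{m-j-1}{k}\big)\mu^p n^p$ to the left of the vertical line. Meanwhile everything below the turn is an LR drawing of $T_n(u_m)$'s relevant part; but since $\pi_\Gamma$ turned into $\alpha_m$, what remains below includes $T_n(u_{m+1})$, whose width by the inner induction hypothesis is at least $\tfrac{k-m}{k}c(\mu n)^p + c(\varphi n)^p$. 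Adding the $+1$ for the column separating $\alpha$ from the vertical line, and noting $\tfrac{m-j-1}{k} + \tfrac{k-m}{k} = \tfrac{k-j-1}{k}$, we recover $\tfrac{k-j+1-2}{k}c(\mu n)^p + c(\varphi n)^p$ — two short of the target. The missing $\tfrac{2}{k}c(\mu n)^p$ is exactly why $c$ must be taken \emph{small} and why the constant $c_0$ and the $O(k)$ slack in the construction matter: I would absorb this deficit using that $(\mu n/k)^p \le c_0^{-p}$ is a constant while $c(\mu n)^p/k$ with the chosen $h$ is comparably small, or more carefully track a $+O(1/k)$ error term throughout, which is the standard device in this type of argument. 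Symmetrically, if $\pi_\Gamma$ turns into a \emph{right} subtree $\beta_\ell$ at some $v_\ell$, the same computation runs with $\beta$ in place of $\alpha$ and Lemma~\ref{lem:subtrees} applied to the $\beta$'s; and if $\pi_\Gamma$ reaches $u_k$ and turns into $L$ (or $R$), then $L$ (resp.\ $R$) is a subtree of $\pi_\Gamma$ contributing $c(\varphi n)^p + 1$ while the other plus all the ruler subtrees $\alpha_j,\ldots,\alpha_{k-1}$ (resp.\ $\beta$'s) contribute, via Lemma~\ref{lem:subtrees} with $|J| = k-j$, width at least $\tfrac{k-j-1}{k}c(\mu n)^p$, again landing within an $O(1/k)$ slack of the target.

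The main obstacle I anticipate is precisely this bookkeeping of the $O(1/k)$-type losses: each time we invoke Lemma~\ref{lem:subtrees} we lose a ``$-1$'' in the numerator, and each time the path turns we potentially lose the contribution of one ruler subtree; making these consistently absorbable requires choosing $c$ small relative to $\varphi^p$, $h = \Theta(\log(\mu n))$ large enough that $\tfrac{1}{k}(\mu n)^p = O(\mu^p (\mu n/c_0)^{-p(1/p)}\cdot\ldots)$ stays bounded, and the constant $c_0$ in $h=\floor{p\log(\mu n/c_0)}$ large enough to swamp these errors — exactly the roles those parameters were introduced for. A secondary subtlety is making rigorous the intuitive claim that in an LR drawing, the width decomposes as (sum/max of widths of left subtrees of the vertical path) plus (whatever hangs below the turn) plus the separating columns; this needs the observation that along the drawn vertical path, all left subtrees sit in disjoint column ranges strictly left of the line and the turned-off subtree together with its descendants sits below, so a single well-chosen left subtree's width and the turned-off subtree's width both count toward the total — I would state this as a short self-contained sub-lemma before the main induction.
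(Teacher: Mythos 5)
Your proposal is correct and follows essentially the same route as the paper's proof: a downward induction on $j$, a case analysis on the node where the vertically aligned path of $\Gamma$ departs from $\Seq{u_j,\ldots,u_k}$, Lemma~\ref{lem:subtrees} applied to the block of traversed indices, the inner induction hypothesis for $T_n(u_{m+1})$ on the other side of the vertical line, and absorption of the resulting $\tfrac{2}{k}c(\mu n)^p$ deficit into the $+1$ separating column using $\tfrac1k(\mu n)^p=O(1)$ and $c$ sufficiently small. The hedging in your first two paragraphs (including the base case $j=k$) resolves exactly as in your third paragraph; the paper simply folds $j=k$ into the same case analysis by setting $\alpha=\beta=\emptyset$.
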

\begin{proof}
  We do another proof by induction, on $j$ (within the outer induction proof).
  Let $\pi(\Gamma)$ denote the root-to-leaf path in $T(u_j)$ that is vertically aligned in $\Gamma$. 
  Let $\pi_{j\rightarrow k}$ denote the path $\Seq{u_j,\ldots,u_k}$.
  Consider the last node $w$ that is common to both paths $\pi(\Gamma)$ and $\pi_{j\rightarrow k}$.

\medskip\noindent {\sc Case 1:}
$w=u_k$.
Let $\alpha$ and $\beta$ be the largest subtree among $\alpha_j,\ldots,\alpha_{k-1}$
and $\beta_j,\ldots,\beta_{k-1}$, respectively (in the special case $j=k$, let $\alpha=\beta=\emptyset$).
By Lemma~\ref{lem:subtrees}, 

  \begin{equation*}
    |\alpha|^p,|\beta|^p
    \geq \tfrac{k-j-1}{k} (\mu n)^p.
  \end{equation*}
  If $\pi(\Gamma)$ contains the left child of $u_k$, then the drawings of $\alpha$ and $R$ are separated by the vertical line through $\pi(\Gamma)$, and so (by the outer induction hypothesis) the overall drawing $\Gamma$ has width at least

  \begin{equation*}
    c|\alpha|^p+c|R|^p+1
    \:\geq\: \tfrac{k-j-1}{k} c(\mu n)^p +  c(\varphi n)^p+1
    \:\ge\: \tfrac{k-j+1}{k} c(\mu n)^p +  c(\varphi n)^p,
  \end{equation*}
   for a sufficiently small $c$ (since $\frac 1k (\mu n)^p = O(1)$).
  If $\pi(\Gamma)$ contains the right child of $u_k$, then $\beta_i$ and $L$ are vertically separated,  and the argument is similar.
    
\medskip\noindent {\sc Case 2:} $w=u_m$ for some $j\le m<k$.
Let $\alpha$ be the largest subtree among $\alpha_j,\ldots,\alpha_{m-1}$  (in the special case $m=j$, let $\alpha=\emptyset$).
By Lemma~\ref{lem:subtrees}, 

  \begin{equation*}
    |\alpha|^p
    \geq \tfrac{m-j-1}{k} (\mu n)^p.
  \end{equation*}
  Since $\pi(\Gamma)$ contains the left child of $u_m$, we know that the drawings of $\alpha$ and $T_n(u_{m+1})$ are separated by the vertical line through $\pi(\Gamma)$, and so by the induction hypotheses, the overall drawing $\Gamma$ has width at least
 
\PAPER{
  \begin{equation*}
    c|\alpha|^p+ \tfrac{k-m}{k}c(\mu n)^p + c(\varphi n)^p+1
    \:\geq\: \tfrac{k-j-1}{k} c(\mu n)^p +  c(\varphi n)^p+1
    \:\ge\: \tfrac{k-j+1}{k} c(\mu n)^p +  c(\varphi n)^p,
  \end{equation*}
}\LNCS{
  \begin{eqnarray*}
    c|\alpha|^p+ \tfrac{k-m}{k}c(\mu n)^p + c(\varphi n)^p+1
    &\geq& \tfrac{k-j-1}{k} c(\mu n)^p +  c(\varphi n)^p+1\\
    &\ge& \tfrac{k-j+1}{k} c(\mu n)^p +  c(\varphi n)^p,
  \end{eqnarray*}
}
  for a sufficiently small $c$.

\medskip\noindent {\sc Case 3:} $w=v_m$ for some $m<k$.  This is similar to Case 2.
\qed
\end{proof}

Applying Claim~\ref{claim:wst-indct} with $j=1$,
we see that any LR drawing of $T_n$ has width at
least $c(\mu n)^p + c(\varphi n)^p$, which is at least
$cn^p$, completing the induction proof, provided that

\begin{equation}
  \label{eqn:power-constraint}
  \varphi^p + \mu^p \geq 1.
\end{equation}

For $p=0.429$, we can choose $\mu=0.122$ and
$\varphi\approx 0.297513$ and verify (using a calculator) that both
\eqref{eqn:linear-constraint} and 
\eqref{eqn:power-constraint} are satisfied.

\begin{theorem}
  \label{thm:lower-bound}
  For every positive integer $n$, there is a binary tree with $n$ nodes such that any LR drawing requires $\Omega(n^{0.429})$ width.
\end{theorem}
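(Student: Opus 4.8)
The plan is to establish Theorem~\ref{thm:lower-bound} by exhibiting, for each $n$, the recursively constructed tree $T_n$ described above and proving that any LR drawing of it has width $\Omega(n^p)$ with $p=0.429$. The heart of the argument is Claim~\ref{claim:wst-indct}, which I would prove by a double induction: an outer induction on $n$ (the statement ``every LR drawing of $T_{n'}$ has width $\ge c(n')^p$'' for all $n'<n$) and, inside it, an induction on $j$ descending from $k$ down to $1$ along the spine path $\pi=\langle u_1,v_1,\ldots,u_k\rangle$.

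First I would pin down the parameters. Given $p$, I choose $h=\lfloor p\log(\mu n/c_0)\rfloor$ so that the roughly $2^h=k$ subtrees at the deepest level $h-1$ each receive a bounded number $\Theta(c_0)$ of nodes; the geometric sums defining the ruler pattern (subtree at level $i$ has $\lceil (2^{-1/p})^i\mu n\rceil$ nodes, with $2^i$ copies) then telescope, using the linear constraint~\eqref{eqn:linear-constraint} $\varphi+\mu/(1-2^{-(1-p)/p})=\tfrac12$, to show that the left subtrees and the right subtrees of $\pi$ together with $L,R$ (each of size $\lceil\varphi n\rceil$) account for exactly $n$ nodes up to the $\Theta(c_0 k)$ slack, which is absorbed at the last level. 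Second, I would prove Lemma~\ref{lem:subtrees}: for any block $J$ of consecutive indices, among $\{\alpha_j: j\in J\}$ (or $\{\beta_j\}$) there is one of size $\ge((|J|-1)/k)^{1/p}\mu n$ — this is immediate from the ruler structure, since if $k/2^{i+1}\le |J|<k/2^i$ then $J$ contains an index divisible by $k/2^{i+1}$, hence a subtree of level $\le i$ and size $\ge (2^{-1/p})^i\mu n\ge (|J|/k)^{1/p}\mu n$.

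Third, and this is the main work, comes Claim~\ref{claim:wst-indct}. Fix an LR drawing $\Gamma$ of $T_n(u_j)$; let $\pi(\Gamma)$ be its vertically aligned root-to-leaf path, and let $w$ be the last node common to $\pi(\Gamma)$ and $\pi_{j\to k}=\langle u_j,\ldots,u_k\rangle$. In Case~$w=u_k$: one of $L,R$ lies on one side of the vertical line through $\pi(\Gamma)$ and a largest surviving subtree $\alpha$ (or $\beta$) among $\alpha_j,\ldots,\alpha_{k-1}$ lies on the other side, so the width is $\ge c|\alpha|^p+c|R|^p+1$; applying Lemma~\ref{lem:subtrees} with $|J|=k-j$ and the outer induction hypothesis $|R|\ge(\varphi n)^p$ gives $\ge\tfrac{k-j-1}{k}c(\mu n)^p+c(\varphi n)^p+1$, which beats the target $\tfrac{k-j+1}{k}c(\mu n)^p+c(\varphi n)^p$ because the deficit of $\tfrac{2}{k}c(\mu n)^p$ is $O(1)$ (as $\tfrac1k(\mu n)^p=O(1)$ by the choice of $h$) and is covered by the $+1$ once $c$ is small enough. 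In Case~$w=u_m$ with $j\le m<k$: $\pi(\Gamma)$ takes the left child of $u_m$, so a largest $\alpha$ among $\alpha_j,\ldots,\alpha_{m-1}$ is separated from $T_n(u_{m+1})$; by Lemma~\ref{lem:subtrees} and the inner induction hypothesis applied to $T_n(u_{m+1})$, the width is $\ge c|\alpha|^p+\tfrac{k-m}{k}c(\mu n)^p+c(\varphi n)^p+1\ge\tfrac{k-j-1}{k}c(\mu n)^p+c(\varphi n)^p+1$, again clearing the target. Case~$w=v_m$ is symmetric. Applying the claim with $j=1$ yields width $\ge c(\mu n)^p+c(\varphi n)^p\ge cn^p$ provided the power constraint~\eqref{eqn:power-constraint} $\varphi^p+\mu^p\ge1$ holds, which closes the outer induction.

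I expect the delicate step to be the bookkeeping in Case~$w=u_k$ and Case~$w=u_m$: one must verify that the ``$-\tfrac{2}{k}c(\mu n)^p$'' loss incurred each time $\pi(\Gamma)$ branches off is genuinely $O(1)$ and is absorbed by the $+1$ gained from the separating vertical line, uniformly over all $j$ and over the recursion — this is exactly why $h$ is set so that $(\mu n)^p/k=\Theta(1)$, and why $c$ must be chosen sufficiently small (but fixed, independent of $n$). The only other thing to check is consistency of the two numeric constraints: for $p=0.429$, taking $\mu=0.122$ forces $\varphi\approx0.297513$ via~\eqref{eqn:linear-constraint}, and then $\varphi^p+\mu^p\ge1$ must be verified numerically, which it is. Everything else — the telescoping node count, Lemma~\ref{lem:subtrees}, the symmetry of Case~3 — is routine.
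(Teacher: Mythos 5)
Your proposal is correct and follows essentially the same route as the paper: the same ruler-pattern construction with parameters tied by constraints~\eqref{eqn:linear-constraint} and~\eqref{eqn:power-constraint}, the same Lemma~\ref{lem:subtrees}, the same double induction in Claim~\ref{claim:wst-indct} with the identical three-case analysis on where $\pi(\Gamma)$ leaves the spine, and the same absorption of the $\tfrac{2}{k}c(\mu n)^p$ deficit by the $+1$ via the choice of $h$ making $(\mu n)^p/k=O(1)$. Nothing is missing; the numerical verification for $p=0.429$, $\mu=0.122$, $\varphi\approx 0.297513$ matches the paper exactly.
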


\begin{remarks}
The maximum value of $p$ that guarantees the existence of $\mu$ and $\varphi$ satisfying 
\eqref{eqn:linear-constraint} and 
\eqref{eqn:power-constraint} has a concise description: it is given by $p=1/(1+x)$, where $x$ is the solution to the equation

\begin{equation*}
  1-2^{-x}=(2^{1/x}-1)^x.
\end{equation*}

Our lower-bound proof is very similar to Frati, Patrignani, and Roselli's~\cite{FratiPR20}, but there are two main differences: First, their tree construction was parameterized by a different parameter $h$ instead of $n$;
they upper-bounded the size $n$ by an exponential function on $h$ and lower-bounded the width by another exponential function on $h$.  Second, and more crucially, they chose $\varphi=\mu$ (in our terminology).  Besides convenience, we suspect that their choice was due to the above parameterization issue.  With this extra, unnecessary constraint $\varphi=\mu$, the best choice of $p$ was only around $0.418$.
\end{remarks}

\section{Final Remarks}

The main open problem is to narrow the remaining small gap in the exponents of the upper and lower bound (between $0.437$ and $0.429$).
The fact that both the upper and lower bound proofs use similar ``ruler patterns'' suggests that we are on the right track (even though looking for further tiny improvements in the upper-bound proof by complicating the analysis, along the lines of Section~\ref{sec:slight}, doesn't seem very worthwhile).

Frati {\it et al.}~\cite{FratiPR20} have computed the exact optimal width for small values of $n$, and according to their experimental data for all $n\le 455$, a function of the form $W^*(n)=an^b-c$ with the least-squares fit is $W^*(n)\approx 1.54n^{0.443}-0.55$.  Our results reveal that the true exponent is actually smaller.


Another open problem is to  bound the related function $W^{**}(n)$ mentioned in footnote~1 of the introduction; our new upper-bound proof does not work for this problem, but Chan's $O(n^{0.48})$ upper bound~\cite{SODA99} still holds.



\PAPER{
\small
\bibliographystyle{plain}
}\LNCS{
\bibliographystyle{splncs04}
}
\bibliography{tree}

\end{document}